\documentclass[11pt]{article}
\usepackage{amsmath, amsthm}
\usepackage{latexsym}
\usepackage{graphicx}
\setlength{\textwidth}{6.3in}
\setlength{\textheight}{8.7in}
\setlength{\topmargin}{0pt}
\setlength{\headsep}{0pt}
\setlength{\headheight}{0pt}
\setlength{\oddsidemargin}{0pt}
\setlength{\evensidemargin}{0pt}

\makeatletter
\newfont{\footsc}{cmcsc10 at 8truept}
\newfont{\footbf}{cmbx10 at 8truept}
\newfont{\footrm}{cmr10 at 10truept}
\makeatother
\pagestyle{plain}

\newtheorem{theorem}{\bf Theorem}
\newtheorem{proposition}{\bf Proposition}
\newtheorem{lemma}{\bf Lemma}
\newtheorem{corollary}{\bf Corollary}

\begin{document}
\title{Strict Monotonicity and Convergence Rate of Titterington's Algorithm for Computing D-optimal Designs}

\author{Yaming Yu\\
\small Department of Statistics\\[-0.8ex]
\small University of California\\[-0.8ex] 
\small Irvine, CA 92697, USA\\[-0.8ex]
\small \texttt{yamingy@uci.edu}}

\date{}
\maketitle

\begin{abstract}
We study a class of multiplicative algorithms introduced by Silvey et al.\ (1978) for computing D-optimal designs.  Strict monotonicity is established for a variant considered by Titterington (1978).  A formula for the rate of convergence is also derived.  This is used to explain why modifications considered by Titterington (1978) and Dette et al. (2008) usually converge faster. 

{\bf Keywords:} D-optimality; experimental design; multiplicative algorithm.
\end{abstract}

\section{Introduction}
Let $\mathcal{X}=\{x_1, \ldots, x_n\}\subset \mathbf{R}^m$ be a design space of $n$ points ($n\geq m$).  We consider computational aspects of D-optimal design (approximate theory) for linear models (Kiefer 1974; Silvey, 1980; P\'{a}zman, 1986; Pukelsheim, 1993).  The D-criterion seeks to maximize the determinant of the $m\times m$ matrix 
$$M(w)=\sum_{i=1}^n w_ix_ix_i^\top$$
with respect to $w=(w_1,\ldots, w_n)^\top\in \bar{\Omega},$ where $\bar{\Omega}$ denotes the closure of $\Omega=\{w:\ \sum_{i=1}^n w_i=1,\ w_i> 0\}$.  As usual, $M(w)$ represents the Fisher information for the $m\times 1$ parameter $\theta$ in the linear model 
$$y|(x,\theta)\sim {\rm N}(x^\top \theta, \sigma^2)$$
when the number of units assigned to $x_i$ is proportional to $w_i$.  An iterative procedure to solve this problem (Silvey et al.\ 1978) is as follows. 

{\bf Algorithm~I} 
\begin{enumerate}
\item
Set $w^{(0)}=(w_1^{(0)}, \ldots, w_n^{(0)})^\top\in \Omega$. 
\item
For $t=1,2,\ldots$, compute 
\begin{equation}
\label{alg1}
w_i^{(t)}=w_i^{(t-1)} \frac{x_i^\top M^{-1}(w^{(t-1)}) x_i}{m},\quad i=1, \ldots, n.
\end{equation}
Iterate until convergence.
\end{enumerate}

Algorithm~II (Titterington 1978), a variant of Algorithm~I, can be applied when the design points include an intercept, i.e., $x_i=(1, z_i^\top)^\top$, where $z_i\in \mathbf{R}^{m-1}$.

{\bf Algorithm II} 
\begin{enumerate}
\item
The same as Step 1 of Algorithm~I.
\item
For $t=1,2\ldots$, compute 
$$\bar{z}=\sum_{i=1}^n w_i^{(t-1)} z_i;\quad M_c(w^{(t-1)})=\sum_{i=1}^n w_i^{(t-1)} (z_i-\bar{z}) (z_i-\bar{z})^\top;$$
\begin{equation}
\label{alg2}
w_i^{(t)}=w_i^{(t-1)} \frac{(z_i-\bar{z})^{\top} M_c^{-1}(w^{(t-1)}) (z_i-\bar{z})}{m-1},\quad i=1, \ldots, n.
\end{equation}
Iterate until convergence.
\end{enumerate} 

In a form that resembles (\ref{alg1}) even more closely, (\ref{alg2}) reads  
\begin{equation}
\label{alg2+}
w_i^{(t)}=w_i^{(t-1)} \frac{x_i^\top M^{-1}(w^{(t-1)}) x_i-\alpha}{m-\alpha},\quad i=1, \ldots, n,
\end{equation}
with $\alpha=1$.  Note that (\ref{alg2+}) does not require that the design points include an intercept, and therefore can be more broadly applicable than (\ref{alg2}).  In what follows $x_i$ need not include an intercept when we refer to (\ref{alg2+}). 

Algorithms I and II have generated considerable interest; see, for example, Titterington (1976, 1978), Silvey et al. (1978), P\'{a}zman (1986), Torsney and Mandal (2006), Harman and Pronzato (2007), Dette et al.\ (2008), and Yu (2010).  Algorithm~I is known to be monotonic (Titterington 1976), i.e., $\det M(w^{(t)})$ never decreases in $t$.  Monotonicity of Algorithm~II has been resolved recently (Titterington 1978; Yu, 2010).  Part of this work aims to extend this to strict monotonicity, thereby showing that Algorithm~II converges monotonically for $m\geq 3$, and fully resolving Titterington's (1978) conjecture. 

It has been observed that Algorithm~II usually converges faster than Algorithm~I; see, e.g., Dette et al. (2008).  Another goal of this work is to give an explanation of this by a theoretical analysis of the convergence rates.  Our investigation is partly inspired by Dette et al. (2008), who propose an iteration of the form of (\ref{alg2+}) with a dynamic choice of $\alpha=\alpha^{(t)}$.  These authors provide an upper bound on $\alpha^{(t)}$ which ensures the monotonicity of (\ref{alg2+}), and also observe that their algorithm converges faster than Algorithm~I in numerical examples.  We shall also discuss the convergence rate of this dynamic algorithm.

Section~2 establishes the strict monotonicity of Algorithm~II.  The argument extends that of Yu (2010).  In Section~3, we analyze iteration (\ref{alg2+}) for fixed $\alpha$ in terms of both the {\it matrix rate} and the {\it global rate}.  For Algorithm~I (i.e., $\alpha=0$), it is shown that the matrix rate has only nonnegative eigenvalues.  Combined with a simple relation between the convergence rates of (\ref{alg2+}) for different $\alpha$, this shows that, with some exceptions, iteration (\ref{alg2+}) with $\alpha>0$ will converge faster than Algorithm~I.  Section~4 concludes with a small numerical illustration. 

\section{Strict Monotonicity of Algorithm~II}
Theorem 1 of Yu (2010) implies the monotonicity, but not {\it strict monotonicity}, of iteration (\ref{alg2}).  In Proposition \ref{prop1} below, we establish strict monotonicity for $m\geq 3$.  We include the proof of monotonicity for completeness, but our emphasis is on the equality condition.  Strict monotonicity plays a key role in the proof of the convergence theorem (Theorem \ref{thm0}). 

Let us denote $\Omega_+=\{w\in \bar{\Omega}:\ M(w)>0\ ({\rm positive\ definite})\}$.  In this section we assume $x_i=(1, z_i^\top)^\top$, and write $X\equiv (x_1, \ldots, x_n)^\top$. 
\begin{proposition}
\label{prop1}
Assume $m\geq 3$ and $X$ has full rank $m$.  Then iteration (\ref{alg2}) is strictly monotonic.  That is, if $w^{(t-1)}, w^{(t)}\in \Omega_+$ satisfy (\ref{alg2}), then $\det M(w^{(t-1)})\leq \det M(w^{(t)})$, with equality only if $w^{(t)}=w^{(t-1)}$. 
\end{proposition}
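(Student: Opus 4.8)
Write $w=w^{(t-1)}$ and $v=w^{(t)}$. Because $x_i=(1,z_i^\top)^\top$, the block structure of $M(w')$ gives $\det M(w')=\det M_c(w')$ for any $w'$ on the simplex, so it suffices to compare $\det M_c(v)$ with $\det M_c(w)$. The first move is to \emph{whiten}: set $\tilde s_i=M_c^{-1/2}(w)(z_i-\bar z)\in\mathbf R^{m-1}$, so that $\sum_i w_i\tilde s_i=0$ and $\sum_i w_i\tilde s_i\tilde s_i^\top=I_{m-1}$, the leverage is $d_i:=(z_i-\bar z)^\top M_c^{-1}(w)(z_i-\bar z)=\|\tilde s_i\|^2$, and (\ref{alg2}) becomes $v_i=q_i:=w_i\|\tilde s_i\|^2/(m-1)$, again a probability vector since $\sum_i w_i\|\tilde s_i\|^2=\mathrm{tr}(I_{m-1})=m-1$. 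As the whitening is affine, $\det M_c(v)/\det M_c(w)=\det\mathrm{Cov}_q(\tilde s)$, while $\mathrm{Cov}_w(\tilde s)=I_{m-1}$. Hence Proposition~\ref{prop1} is equivalent to the statement that $\det\mathrm{Cov}_q(\tilde s)\ge 1$, with equality only if $q=w$, i.e.\ only if $\|\tilde s_i\|^2=m-1$ for every $i$ with $w_i>0$.

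The inequality $\det\mathrm{Cov}_q(\tilde s)\ge 1$ is precisely the monotonicity of (\ref{alg2}), which follows from Theorem~1 of Yu (2010); for completeness I would reproduce that argument. The ingredients I expect to use are the Cauchy--Binet expansion $\det M_c(q)=\det(\sum_i q_i\hat s_i\hat s_i^\top)=\sum_{|S|=m}(\prod_{i\in S}q_i)(\det\hat X_S)^2$ with $\hat s_i=(1,\tilde s_i^\top)^\top$ and $\hat X=(\hat s_1,\dots,\hat s_n)^\top$, which writes $\det M_c(q)/\det M_c(w)$ as a weighted average of the products $\prod_{i\in S}(q_i/w_i)=\prod_{i\in S}\bigl(\|\tilde s_i\|^2/(m-1)\bigr)$ over the bases $S$ and invites a weighted AM--GM (Jensen) bound; together with, to supply the slack a bare Jensen bound lacks, a positive-semidefiniteness step resting on the operator convexity of $X\mapsto X^2$ (in whitened coordinates $\sum_i w_i\|\hat s_i\|^2\hat s_i\hat s_i^\top=\sum_i w_i(\hat s_i\hat s_i^\top)^2\succeq(\sum_i w_i\hat s_i\hat s_i^\top)^2=I_m$). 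The intercept enters through the identity $\det M=\det M_c$ and the accompanying relation $x_i^\top M^{-1}(w)x_i=1+d_i$, which is what makes $q$ nonnegative. Combining these yields $\det\mathrm{Cov}_q(\tilde s)\ge 1$.

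The substance of the proposition, and the step I expect to be the main obstacle, is the \emph{equality} analysis. Assume $\det\mathrm{Cov}_q(\tilde s)=1$; then every inequality above is tight, and I would extract the resulting rigidity: tightness of the AM--GM step forces $\prod_{i\in S}\|\tilde s_i\|^2$ to be constant over all bases $S$, and tightness of the semidefinite step forces the whitened points $\{\tilde s_i:w_i>0\}$ onto the sphere $\|\tilde s_i\|^2=m-1$ (or a closely related condition). For $m\ge 3$ the whitening identity $\sum_i w_i\tilde s_i\tilde s_i^\top=I_{m-1}$ lives in dimension $\ge 2$, and I would argue that, combined with the above, it leaves only $q=w$, hence $v=w$. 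The hypothesis $m\ge 3$ is essential: for $m=2$ the implication fails, as one already sees from the two-point design on $z\in\{-1,1\}$, where (\ref{alg2}) merely interchanges the two weights, so $\det M$ is preserved while $v\ne w$ unless $w_1=w_2$ --- there the rigidity condition only yields $\|\tilde s_1\|^2\|\tilde s_2\|^2=1$, not $\|\tilde s_i\|^2=1$. A secondary point is that the iteration can send some $v_i$ to $0$, so the Cauchy--Binet sums should be taken over the support of $w$ and one must check the argument survives this; I expect this to be routine. Thus the crux is to pin down the exact equality conditions of the monotonicity proof and show that, when $m\ge 3$, they admit no measure-preserving rearrangement other than the identity.
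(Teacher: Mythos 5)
There is a genuine gap: the equality analysis --- which is the entire novel content of Proposition~\ref{prop1}, since monotonicity itself is already in Yu (2010) --- is never actually carried out. You write ``I would extract the resulting rigidity,'' ``(or a closely related condition),'' and ``I would argue that \ldots it leaves only $q=w$,'' but these are placeholders where the proof has to happen. Moreover, the rigidity conditions you anticipate are not obviously the right ones. Your $m=2$ example already shows that equality of determinants for iteration (\ref{alg2}) does \emph{not} force the leverages $\|\tilde s_i\|^2$ to be constant on the support (the weights can swap while $\det M$ is preserved), so whatever comes out of the tightness analysis must be weaker than ``all points on the sphere,'' and the hypothesis $m\geq 3$ must enter through a concrete structural argument. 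In the paper that argument is a rank count: the equality conditions of its two inequalities combine to give $w_i^{(t-1)}(x_i-\bar x^{(t-1)})=w_i^{(t)}(x_i-\bar x^{(t)})$ for all $i$, i.e.\ $(\Delta_{w^{(t-1)}}-\Delta_{w^{(t)}})X=(w^{(t-1)},-w^{(t)})(\bar x^{(t-1)},\bar x^{(t)})^\top$, whose left side has rank $m\geq 3$ while the right side has rank at most $2$ --- a contradiction unless $w^{(t)}=w^{(t-1)}$. Nothing in your proposal plays the role of this step.

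A secondary problem is that your inequality chain for the centered iteration is asserted rather than assembled. The Cauchy--Binet plus Jensen route does close cleanly for Algorithm~I (multiplier $d_{ii}/m$), but for (\ref{alg2}) the multiplier is $((d_{ii}-1)/(m-1))$ and the analogous exponent sum no longer telescopes to zero by a single convexity bound; this is exactly why Yu (2010) and the present paper use a different device (an auxiliary function $h(\Sigma,w,Q)$ built from concavity of $\log\det(K\cdot K^\top)$, combined with the Gauss--Markov inequality $Q\Delta_w^{-1}Q^\top\succeq M^{-1}(w)$ for $QX=I_m$). Your proposed ``operator convexity'' step $\sum_i w_i\|\hat s_i\|^2\hat s_i\hat s_i^\top\succeq I_m$ only yields $\det M(q')\geq m^{-m}$, not $\geq 1$, and its literal equality condition ($\hat s_i\hat s_i^\top=I_m$) is vacuous for $m\geq 2$, so it cannot supply the slack you want in the form stated. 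The reduction $\det M=\det M_c$ and the whitening are fine, and the $m=2$ counterexample is correctly identified, but as it stands the proposal is a plan for a proof, not a proof.
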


\begin{proof}
Let $K=(0_{m-1}, I_{m-1})$ where $0_r$ denotes the $r\times 1$ vector of zeros, and $I_r$ denotes the $r\times r$ identity matrix.  Define $\psi(M)=\log\det (KMK^\top)$ for any positive definite $m\times m$ matrix $M$.  Consider the function
$$h(\Sigma, w, Q)=\psi(\Sigma)+tr(\psi'(\Sigma)(Q\Delta^{-1}_wQ^\top-\Sigma))$$
where $\Sigma$ ($m\times m$) is positive definite, $w\in \Omega$, $\Delta_w={\rm Diag}(w)$, and $Q$ ($m\times n$) is full-rank.  Because $\psi(M)$ is concave in $M$, and strictly concave when restricted to $KMK^\top$, we have
\begin{equation}
\label{ineq1}
h(\Sigma, w, Q)\geq \psi(Q\Delta^{-1}_w Q^\top)
\end{equation}
with equality only when $K (Q\Delta^{-1}_w Q^\top-\Sigma) K^\top=0$. 

On the other hand, suppose $QX=I_m$, then we have 
\begin{equation}
\label{ineq2}
\psi(Q\Delta^{-1}_w Q^\top)\geq \psi(M^{-1}(w)).
\end{equation}
This holds because $QY$ is an unbiased estimator of $\theta$ in the linear model
$$Y\sim {\rm N}(X\theta,\, \Delta^{-1}_w).$$
Hence its variance matrix $Q\Delta^{-1}_w Q^\top$ is at least as large (in the positive definite ordering) as $(X^\top \Delta_w X)^{-1}=M^{-1}(w)$, which corresponds to the weighted least squares estimator $Q_{\rm WLS}=(X^\top \Delta_w X)^{-1} X^\top \Delta_w$.  Moreover, equality in (\ref{ineq2}) holds only when $K(Q-Q_{\rm WLS})=0$, i.e., when $QY$ agrees with $Q_{\rm WLS}Y$ in all coordinates except the first.  

Let $w^{(t-1)}, w^{(t)}\in \Omega$ be related by (\ref{alg2}).  Consider $Q^{(t-1)}=(X^\top \Delta_{w^{(t-1)}} X)^{-1} X^\top \Delta_{w^{(t-1)}}$ and define $Q^{(t)}$ similarly.  We have 
\begin{align}
\nonumber
\psi(M^{-1}(w^{(t-1)}))  &=h(M^{-1}(w^{(t-1)}), w^{(t-1)}, Q^{(t-1)})\\
\label{direct}
                         &=h(M^{-1}(w^{(t-1)}), w^{(t)}, Q^{(t-1)})\\
\label{step1}
                         &\geq \psi(Q^{(t-1)} \Delta_{w^{(t)}}^{-1} Q^{(t-1)\top})\\
\label{step2}
                         &\geq \psi(M^{-1}(w^{(t)})).
\end{align}
The key is the equality in (\ref{direct}), which follows from (\ref{alg2}) after some algebra.  The inequality (\ref{step1}) follows from (\ref{ineq1}).  The inequality (\ref{step2}) follows from (\ref{ineq2}).  We also have the easily verified identity $\psi(M^{-1}(w))=-\log\det M(w)$.  Thus the monotonicity statement holds. 

To prove strict monotonicity, let us check the equality conditions in (\ref{step1}) and (\ref{step2}).  The equality in (\ref{step1}) entails 
\begin{equation}
\label{eq2}
K(M^{-1}(w^{(t-1)})-Q^{(t-1)} \Delta_{w^{(t)}}^{-1} Q^{(t-1)\top})K^\top=0.
\end{equation}
The equality in (\ref{step2}) entails 
\begin{equation}
\label{eq3}
K(Q^{(t-1)}-Q^{(t)})=0, 
\end{equation}
which implies
$$K Q^{(t-1)}\Delta_{w^{(t)}}^{-1} Q^{(t-1)\top}K^\top =KQ^{(t)}\Delta_{w^{(t)}}^{-1} Q^{(t)\top} K^\top = KM^{-1}(w^{(t)})K^\top.$$
We obtain 
\begin{equation}
\label{eq4}
K(M^{-1}(w^{(t)})-M^{-1}(w^{(t-1)}))K^\top=0
\end{equation}
in view of (\ref{eq2}).  After some calculation, we can show that (\ref{eq3}) and (\ref{eq4}) imply  
$$X^\top (\Delta_{w^{(t-1)}} -w^{(t-1)}w^{(t-1)\top}) =X^\top (\Delta_{w^{(t)}} -w^{(t)}w^{(t)\top}).$$
Equivalently, 
\begin{equation}
\label{eq5}
w_i^{(t-1)}(x_i-\bar{x}^{(t-1)})=w_i^{(t)}(x_i-\bar{x}^{(t)}),\quad i=1,\ldots, n,
\end{equation}
where $\bar{x}^{(t)}=\sum_{i=1}^n w_i^{(t)} x_i$, and $\bar{x}^{(t-1)}$ is defined similarly.  If $w_i^{(t-1)}=w_i^{(t)}$ for any $i$, then $\bar{x}^{(t-1)}=\bar{x}^{(t)}\equiv \bar{x}$, and $(w_j^{(t-1)}-w_j^{(t)})(x_j-\bar{x})=0$ for all $j$.
That is, either $x_j-\bar{x}=0$, or $w_j^{(t-1)}=w_j^{(t)}$.  If $x_j-\bar{x}=0$, then $w_j^{(t)}=0$ by the form of (\ref{alg2}), which contradicts the assumption that $w^{(t)}\in \Omega$.  Hence, if $w_i^{(t-1)}=w_i^{(t)}$ for any $i$, then it holds for all $i$.  Let us assume $w_i^{(t-1)}\neq w_i^{(t)}$ for all $i$.  Rewriting (\ref{eq5}) we get 
\begin{equation}
\label{contra}
(\Delta_{w^{(t-1)}}-\Delta_{w^{(t)}}) X = (w^{(t-1)},\ -w^{(t)}) (\bar{x}^{(t-1)}, \bar{x}^{(t)})^\top.
\end{equation}
We obtain a contradiction because the left-hand side of (\ref{contra}) has rank $m\geq 3$, whereas the right-hand side has rank at most two.  It follows that $w^{(t)}=w^{(t-1)}$. 

Although the above argument assumes $w^{(t-1)}, w^{(t)}\in \Omega,$ i.e., they have all positive coordinates, the conclusion still holds if we only assume $w^{(t-1)}, w^{(t)}\in \Omega_+$.  First, we can restrict our analysis to the positive coordinates of $w^{(t-1)}$.  If $w^{(t-1)}\in \Omega$, but $w^{(t)}$ has some zero coordinates, then we can show $\det M(w^{(t-1)})<\det M(w^{(t)})$ by a limiting argument, upon close inspection of (\ref{direct}). 
\end{proof}

{\bf Remark.} When $m=2$, Algorithm~II is still monotonic, but may not be strictly monotonic; see Pronzato et al. (2000), Chapter 7, and Section~3 below. 

Strict monotonicity leads to the following convergence theorem, which fully resolves Titterington's (1978) conjecture. 

\begin{theorem}
\label{thm0}
Assume $m\geq 3$ and $X$ has full rank.  Let $w^{(t)}$ be a sequence generated by Algorithm~II, starting with $w^{(0)}\in \Omega$.  Then all limit points of $w^{(t)}$ are global maxima of $\det M(w)$ on $w\in \Omega_+$ and, as $t$ increases to $\infty$, $\det M(w^{(t)})$ increases to $\sup_{w\in \Omega_+} \det M(w)$.
\end{theorem}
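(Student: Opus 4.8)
The plan is to pair the monotonicity and strict monotonicity of Proposition~\ref{prop1} with a compactness argument — in the spirit of Zangwill's global convergence theorem — first to conclude that every limit point of $w^{(t)}$ is a fixed point of the iteration, and then to rule out the non-optimal fixed points by exploiting their instability for the dynamics.

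The routine half goes as follows. Iteration~(\ref{alg2}) keeps the weights nonnegative and, since $\sum_i w_i\,x_i^\top M^{-1}(w)x_i=\mathrm{tr}(I_m)=m$, summing to one, so $w^{(t)}\in\bar\Omega$ for every $t$; by Proposition~\ref{prop1}, $\det M(w^{(t)})$ is nondecreasing, and being bounded above by $\sup_{w\in\bar\Omega}\det M(w)<\infty$ it converges to some $L\ge\det M(w^{(0)})>0$. In particular every limit point $w^\ast$ has $\det M(w^\ast)=L>0$, hence $w^\ast\in\Omega_+$, so the update map $T$ (one step of (\ref{alg2})) is continuous at $w^\ast$. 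If $w^{(t_k)}\to w^\ast$ then $w^{(t_k+1)}\to T(w^\ast)$ and $\det M(T(w^\ast))=L=\det M(w^\ast)$, so the equality case of Proposition~\ref{prop1} forces $T(w^\ast)=w^\ast$; reading this off the form~(\ref{alg2+}) with $\alpha=1$, a fixed point $w^\ast\in\Omega_+$ satisfies $x_i^\top M^{-1}(w^\ast)x_i=m$ for every $i$ in its support. (A weight $w_\ell^{(t)}$ can vanish in finite time only when $z_\ell=\bar z^{(t-1)}$, i.e.\ when $x_\ell$ lies in the convex hull of the other points; by $\sum_i\lambda_i x_ix_i^\top\succeq(\sum_i\lambda_i x_i)(\sum_i\lambda_i x_i)^\top$ such a point can be discarded without changing $\max_w\det M(w)$ or the full-rank hypothesis, and this happens at most $n-m$ times, so I may assume $w^{(t)}\in\Omega$ throughout.)

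The crux is to show that a limit point $w^\ast$, which we now know to be a fixed point in $\Omega_+$, is in fact a global maximum. Suppose not. Since $\log\det M$ is concave, $\det M(w^\ast)<L^\ast:=\sup_{w\in\Omega_+}\det M(w)$ forces $x_j^\top M^{-1}(w^\ast)x_j>m$ for some $j$, and since this quantity equals $m$ on the support of $w^\ast$, necessarily $w_j^\ast=0$. The point is that such a fixed point repels the orbit: in a neighborhood of $w^\ast$ the $j$-th weight gets multiplied at each step by $\bigl(x_j^\top M^{-1}(w)x_j-1\bigr)/(m-1)>1$, which is incompatible with $w_j^{(t_k)}\to0$. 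To turn the fact that the orbit merely visits every neighborhood of $w^\ast$ into an outright contradiction, I would first note that strict monotonicity also yields $\|w^{(t+1)}-w^{(t)}\|\to0$ (otherwise a large-step subsequence would have a limit point $v$ with $T(v)\ne v$ yet $\det M(T(v))=\det M(v)$, impossible by Proposition~\ref{prop1}); hence the limit set $W^\ast$ is compact, connected, and consists of non-optimal fixed points, so the continuous function $w\mapsto\max_i x_i^\top M^{-1}(w)x_i$ exceeds $m$ throughout $W^\ast$, say it is $\ge m+\delta$ on a neighborhood of $W^\ast$ that eventually contains the entire orbit. From then on, at each step some off-support weight is amplified by a factor $\ge1+\delta/(m-1)$, and balancing this geometric growth against the constraint $\sum_i w_i^{(t)}=1$ and against the convergence $\det M(w^{(t)})\uparrow L$ yields the contradiction. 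Making this balancing act rigorous — knowing only that the orbit visits, rather than converges to, $w^\ast$ — is the step I expect to be the main obstacle.

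Granting this, every limit point $w^\ast$ is a global maximum, so $\det M(w^\ast)=L^\ast$, and together with $\det M(w^{(t)})\uparrow L=\det M(w^\ast)=L^\ast$ this gives both assertions of the theorem. (That $\sup_{w\in\Omega_+}\det M(w)$ equals the maximum of $\det M$ over all of $\bar\Omega$ is automatic, since under the full-rank hypothesis a D-optimal design has a nonsingular information matrix.)
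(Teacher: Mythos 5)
Your overall architecture is the right one and matches what the paper intends: the paper's own ``proof'' of Theorem~\ref{thm0} consists of (i) the observation that a coordinate can be zeroed only when $z_i=\bar z$, in which case $x_i$ can be safely eliminated (you reproduce this, with the correct justification via $x_\ell x_\ell^\top\preceq\sum_i\lambda_i x_i x_i^\top$ and the remark that elimination preserves full rank and happens only finitely often), and (ii) an instruction to follow the proof of Theorem~2 of Yu (2010) step by step. Your first half --- monotonicity, convergence of $\det M(w^{(t)})$ to some $L>0$, the conclusion that every limit point lies in $\Omega_+$ so that $T$ is continuous there, the use of the equality case of Proposition~\ref{prop1} to show every limit point is a fixed point, and the deduction that $\|w^{(t+1)}-w^{(t)}\|\to 0$ --- is correct and is exactly the routine part of that argument.

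The genuine gap is the one you flag yourself: upgrading ``every limit point is a fixed point'' to ``every limit point is a global maximum.'' As written, your repair does not close it. Knowing that $\max_i x_i^\top M^{-1}(w^{(t)})x_i\ge m+\delta$ for all large $t$ only tells you that at each step \emph{some} coordinate is multiplied by a factor $\ge 1+\delta/(m-1)$; the maximizing index can change from step to step, so no single coordinate need grow geometrically, and since the orbit has all strictly positive coordinates an amplified coordinate is not by itself contradictory --- the contradiction has to come from the interplay with $\sum_i w_i^{(t)}=1$ and the convergence of the objective, which is exactly the ``balancing act'' you leave open. That step is the actual content of the proof of Theorem~2 in Yu (2010), which the paper invokes rather than reproves; without importing that argument, or substituting an alternative (for instance a quantitative strengthening of monotonicity bounding $\log\det M(w^{(t+1)})-\log\det M(w^{(t)})$ from below by a positive function of $\max_i x_i^\top M^{-1}(w^{(t)})x_i-m$, whose summability would force $\max_i x_i^\top M^{-1}(w^{(t)})x_i\to m$ and hence, by the equivalence theorem, optimality of every limit point), the proposal is incomplete at its decisive step.
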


Yu (2010, Theorem 2) presents a convergence theorem for a general class of multiplicative algorithms.  However, we cannot directly appeal to Theorem~2 in Yu (2010) because certain technical conditions are not satisfied.  For example, $w^{(t)}$ need not have all positive coordinates even if $w_i^{(t-1)}>0$ for all $i$.  However, inspection of (\ref{alg2}) shows that $w_i^{(t)}$ is set to zero only when $z_i=\bar{z}$, in which case it can be shown that an optimal design need not include $x_i$ as a support point, i.e., $x_i$ is safely eliminated.  Theorem~\ref{thm0} can then be proved by following the proof of Theorem~2 in Yu (2010) step by step (details omitted). 

\section{Rate of Convergence} 
In this section we analyze the convergence rate of iteration (\ref{alg2+}).  Assume the matrix $X=(x_1, \ldots, x_n)^\top$ has full rank $m\geq 2$.  Let $w^*\in \Omega$ be a global maximizer of $\det M(w)$.  We assume that $w^*$ has all positive components.  (A slightly weaker assumption is that the starting value $w^{(0)}$ has the same zero pattern as $w^*$.)  Though unrealistic in a practical situation, such an assumption makes our analysis tractable.  It seems to be a challenging problem to analyze the convergence rate when the algorithm tends to a boundary limit.  In Section~4, we present numerical examples to corroborate our rather idealized analysis. 
 
The notions of the matrix rate and the global rate are often used in analyzing fixed point algorithms in statistical contexts (Dempster et al. 1977; Meng, 1994).  Assume $0\leq \alpha<m$, and denote the mapping (\ref{alg2+}) by $T$.  The matrix rate of convergence of $T$ is defined as 
$$R(\alpha)=\left.\frac{\partial T(w)}{\partial w}\right|_{w=w^*},$$ 
because we have
\begin{equation}
\label{app}
T(w)-w^*\approx R(\alpha) (w-w^*)
\end{equation}
for $w$ near $w^*$.  The global rate of convergence, $r(\alpha)$, is defined as the spectral radius (the maximum modulus of the eigenvalues) of $R(\alpha)$ when restricted as a linear mapping on the space 
$$\Gamma=\left\{\gamma\in \mathbf{R}^m:\ 1_m \gamma=0\right\}$$
where $1_m$ denotes the $1\times m$ vector of ones.  Restricting $R(\alpha)$ to $\Gamma$ is possible because of the implication $\gamma\in \Gamma \Rightarrow R(\alpha)\gamma \in \Gamma$.  This restriction is imposed because we have the constraints $\sum_i w_i^*=\sum_i w_i=1$, and hence $w-w^*\in \Gamma$ in (\ref{app}).  Also note that such notions of convergence rates merely reflect how the iterations of $T$ behave near $w^*$; whether the algorithm converges from an arbitrary starting value is a different issue. 

Let us define $d_{ij}=x_i^\top M^{-1}(w^*) x_j,\ 1\leq i, j\leq n$.  The matrix rate $R(\alpha)$ admits a simple formula.

\begin{proposition}
The $(i, j)$th entry of $R(\alpha)$ is 
\begin{equation}
\label{entry}
R_{ij}(\alpha)=\left\{\begin{array}{ll} - w_i^* d_{ij}^2/(m-\alpha), & i\neq j,\\ (d_{ii}-w_i^* d_{ii}^2-\alpha)/(m-\alpha), & i=j.\end{array}\right.
\end{equation}
\end{proposition}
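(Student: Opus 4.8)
The plan is to compute the Jacobian of the map $T$ in (\ref{alg2+}) directly. Write $T_i(w)=w_i\,(g_i(w)-\alpha)/(m-\alpha)$ with $g_i(w)=x_i^\top M^{-1}(w)x_i$. Since $M(w^*)>0$, the matrix $M(w)$ is invertible on a neighbourhood of $w^*$ and each $T_i$ is there a smooth (indeed rational) function of $w$, so the partial derivatives exist. First I would record that $w^*$ is genuinely a fixed point of $T$: by the equivalence theorem (Kiefer 1974), a global maximizer of $\det M(w)$ with full support satisfies $x_i^\top M^{-1}(w^*)x_i=m$, i.e.\ $d_{ii}=m$, so $T_i(w^*)=w_i^*(m-\alpha)/(m-\alpha)=w_i^*$. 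This is what makes $R(\alpha)=\partial T/\partial w|_{w=w^*}$ meaningful as a matrix rate, although it is not actually needed for the formula itself.

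Next I would apply the product rule, at a general $w$ near $w^*$,
$$\frac{\partial T_i}{\partial w_j}=\frac{1}{m-\alpha}\Big(\delta_{ij}\,(g_i(w)-\alpha)+w_i\,\frac{\partial g_i}{\partial w_j}\Big).$$
The only nontrivial ingredient is $\partial g_i/\partial w_j$. From $M(w)=\sum_k w_k x_k x_k^\top$ we get $\partial M/\partial w_j=x_jx_j^\top$, and the standard identity $\partial M^{-1}/\partial w_j=-M^{-1}(\partial M/\partial w_j)M^{-1}$ yields
$$\frac{\partial g_i}{\partial w_j}=-\,x_i^\top M^{-1}x_j\,x_j^\top M^{-1}x_i=-\,\big(x_i^\top M^{-1}(w)x_j\big)^2.$$
Evaluated at $w=w^*$ this equals $-d_{ij}^2$, while $g_i(w^*)=d_{ii}$ by definition of $d_{ij}$. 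Substituting back gives
$$R_{ij}(\alpha)=\frac{1}{m-\alpha}\big(\delta_{ij}(d_{ii}-\alpha)-w_i^*d_{ij}^2\big),$$
which is exactly (\ref{entry}) once the cases $i=j$ and $i\neq j$ are separated.

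There is no real obstacle here; the computation is routine, and the proof is short. The only points deserving a line of care are: (i) justifying differentiability, which follows from $M(w^*)>0$ together with continuity of the determinant; (ii) using the correct sign in the derivative of the matrix inverse; and (iii) invoking the equivalence theorem to record $d_{ii}=m$, so that $w^*$ is a fixed point (and, if one wishes, $R_{ii}(\alpha)$ may be rewritten as $(m-w_i^*m^2-\alpha)/(m-\alpha)$). This formula is then the starting point for the eigenvalue analysis of $R(\alpha)$ restricted to $\Gamma$ in the remainder of the section.
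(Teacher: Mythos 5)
Your computation is correct and is essentially identical to the paper's proof: both differentiate $I_m=M^{-1}(w)M(w)$ to obtain $\partial M^{-1}/\partial w_j=-M^{-1}x_jx_j^\top M^{-1}$ and then apply the product rule to $T_i(w)=w_i(x_i^\top M^{-1}(w)x_i-\alpha)/(m-\alpha)$. Your extra remarks (the fixed-point property via $d_{ii}=m$ and the explicit diagonal case, which the paper dismisses as ``similar'') are correct but not needed for the stated formula.
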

\begin{proof} By differentiating $I_m=M^{-1}(w) M(w)$ with respect to $w_j$ and rearranging, we obtain 
$$\frac{\partial M^{-1}(w)}{\partial w_j}=-M^{-1}(w)\frac{\partial M(w)}{\partial w_j} M^{-1}(w)=-M^{-1}(w) x_j x_j^\top M^{-1}(w),$$
which yields, for $i\neq j$, 
$$\frac{\partial T_i(w)}{\partial w_j}=\frac{w_i}{m-\alpha} x_i^\top \frac{\partial M^{-1}(w)}{\partial w_j}x_i=-\frac{w_i d_{ij}^2}{m-\alpha}.$$
The case of $i=j$ is similar.
\end{proof}

Theorem \ref{thm1} investigates the properties of $R(\alpha)$.  Its proof uses a lemma concerning the Hadamard product (see, e.g., Pukelsheim (1993), p. 199).
\begin{lemma}
\label{hadamard}
If $A=(A_{ij})$ and $B=(B_{ij})$ are symmetric nonnegative definite matrices of the same dimension, then their entry-wise product $C=((A_{ij}B_{ij}))$ is also nonnegative definite. 
\end{lemma}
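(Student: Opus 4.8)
The plan is to prove this classical fact (the Schur product theorem) via a rank-one decomposition of one of the two factors. Since $B$ is symmetric and nonnegative definite, the spectral theorem lets me write $B=\sum_{k=1}^{r} v_k v_k^\top$ for suitable vectors $v_k\in\mathbf{R}^m$ (absorbing the nonnegative eigenvalues of $B$ into its eigenvectors). The Hadamard product is bilinear, so $C=A\circ B=\sum_{k=1}^{r} A\circ(v_k v_k^\top)$, and it then suffices to show that each summand $A\circ(v_k v_k^\top)$ is nonnegative definite, because a finite sum of nonnegative definite matrices is again nonnegative definite.

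For a single rank-one factor $vv^\top$, the $(i,j)$ entry of $A\circ(vv^\top)$ is $A_{ij}v_i v_j$, which is precisely the $(i,j)$ entry of $D_v A D_v$ with $D_v={\rm Diag}(v)$. Hence $A\circ(vv^\top)=D_v A D_v$, and for any $x\in\mathbf{R}^m$ we have $x^\top D_v A D_v x=(D_v x)^\top A (D_v x)\geq 0$ because $A$ is nonnegative definite. Summing over $k$ gives the claim. An alternative I would mention only in passing: the Kronecker product $A\otimes B$ is nonnegative definite (its eigenvalues are the pairwise products of those of $A$ and $B$, all nonnegative), and $C$ is the principal submatrix of $A\otimes B$ obtained by retaining the rows and columns indexed by the pairs $(i,i)$; a principal submatrix of a nonnegative definite matrix is nonnegative definite.

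There is no genuine obstacle here. The only point requiring any care is invoking the spectral decomposition of $B$ so that the rank-one pieces $v_k v_k^\top$ are actually available; once that is in hand, the argument reduces to the one-line congruence identity $A\circ(vv^\top)=D_v A D_v$. As the lemma is standard and cited (Pukelsheim 1993, p.~199), I would keep the write-up to just these few lines.
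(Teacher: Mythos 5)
Your proof is correct. Note, however, that the paper itself gives no proof of this lemma at all --- it simply cites Pukelsheim (1993, p.\ 199) and uses the statement as a known fact (the Schur product theorem). Your argument is the standard one: write $B=\sum_k v_k v_k^\top$ via the spectral theorem, observe the congruence identity $A\circ(v v^\top)=D_v A D_v$ with $D_v={\rm Diag}(v)$, and sum the resulting nonnegative definite pieces. Both that route and your alternative via the Kronecker product are sound; either would serve as a self-contained replacement for the citation.
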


\begin{theorem}
\label{thm1}
The matrix $R(\alpha)$ is diagonalizable, and all of its eigenvalues lie in the interval $[-\alpha/(m-\alpha),\, 1]$.
\end{theorem}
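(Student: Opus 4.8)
The plan is to reduce the whole statement to the spectrum of a single symmetric matrix. Since $w^*$ lies in the interior of the simplex and globally maximizes $\det M(w)$, the equivalence theorem for D-optimality (Kiefer 1974) forces $d_{ii}=m$ for every $i$. Substituting $d_{ii}=m$ into the entry formula (\ref{entry}) and collecting terms, $R(\alpha)$ takes the compact form
$$R(\alpha)=I_n-\frac{1}{m-\alpha}\,\Delta_{w^*}B,\qquad B=\bigl(d_{ij}^2\bigr)_{1\le i,j\le n},$$
where $\Delta_{w^*}={\rm Diag}(w^*)$. Because $w^*$ has all positive components, $\Delta_{w^*}^{1/2}$ is invertible, and conjugating by it shows that $R(\alpha)$ is similar to $I_n-\frac{1}{m-\alpha}\widetilde B$ with $\widetilde B=\Delta_{w^*}^{1/2}B\,\Delta_{w^*}^{1/2}$ symmetric. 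This settles diagonalizability and the reality of the eigenvalues, and the eigenvalues of $R(\alpha)$ are exactly $1-\mu/(m-\alpha)$ as $\mu$ ranges over the eigenvalues of $\widetilde B$. So everything comes down to proving $0\preceq\widetilde B\preceq mI_n$.

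The lower bound is immediate from Lemma~\ref{hadamard}: the matrix $(d_{ij})=XM^{-1}(w^*)X^\top$ is nonnegative definite, hence so is its entrywise square $B$, and therefore $\widetilde B\succeq 0$. For the upper bound I would pass to the vectors $a_i=M^{-1/2}(w^*)x_i$, for which $d_{ij}=a_i^\top a_j$, $\|a_i\|^2=d_{ii}=m$, and $\sum_i w_i^*a_ia_i^\top=I_m$. Fixing a unit vector $u$ and writing $v_i=\sqrt{w_i^*}\,u_i$, the Rayleigh quotient becomes a squared Frobenius norm,
$$u^\top\widetilde Bu=\sum_{i,j}v_iv_j(a_i^\top a_j)^2=\Bigl\|\sum_i v_ia_ia_i^\top\Bigr\|_F^2=:\|N\|_F^2,$$
and then two applications of Cauchy--Schwarz finish the job: first, $\|N\|_F^2={\rm tr}\bigl(N\sum_iv_ia_ia_i^\top\bigr)=\sum_i u_i\sqrt{w_i^*}\,(a_i^\top Na_i)\le\bigl(\sum_iw_i^*(a_i^\top Na_i)^2\bigr)^{1/2}$; second, $(a_i^\top Na_i)^2\le\|a_i\|^2\,a_i^\top N^2a_i=m\,a_i^\top N^2a_i$, so that $\sum_iw_i^*(a_i^\top Na_i)^2\le m\,{\rm tr}\bigl(N^2\sum_iw_i^*a_ia_i^\top\bigr)=m\|N\|_F^2$. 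Combining these gives $\|N\|_F^2\le\sqrt m\,\|N\|_F$, i.e.\ $u^\top\widetilde Bu=\|N\|_F^2\le m$, so $\widetilde B\preceq mI_n$. Hence the eigenvalues of $R(\alpha)$ lie in $[\,1-m/(m-\alpha),\,1\,]=[\,-\alpha/(m-\alpha),\,1\,]$.

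I expect the only real obstacle to be the upper bound $\widetilde B\preceq mI_n$; the trick is to see the quadratic form $u^\top\widetilde Bu$ as $\|\sum_i v_ia_ia_i^\top\|_F^2$ and then to exploit the two interior-optimality identities $\|a_i\|^2=m$ and $\sum_iw_i^*a_ia_i^\top=I_m$ inside a double Cauchy--Schwarz estimate. Everything else --- the rewriting of $R(\alpha)$ via the equivalence theorem, the symmetrization by $\Delta_{w^*}^{1/2}$, and the Hadamard-product positivity --- is routine.
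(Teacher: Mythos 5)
Your proof is correct, and its first half coincides with the paper's: both invoke the equivalence theorem to get $d_{ii}=m$, rewrite $R(\alpha)$ as the identity minus $\Delta_{w^*}D^*/(m-\alpha)$ with $D^*=(d_{ij}^2)$, obtain nonnegative definiteness of $D^*$ from the Hadamard product lemma (Lemma~\ref{hadamard}), and conjugate by $\Delta_{w^*}^{1/2}$ to settle diagonalizability and the upper bound $1$. Where you genuinely diverge is the lower bound $-\alpha/(m-\alpha)$. The paper gets it in one line from the Frobenius--Perron theorem: $\Delta_{w^*}D^*$ is entrywise nonnegative with every column summing to $\sum_i w_i^* d_{ij}^2=d_{jj}=m$, so its spectral radius is at most $m$. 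You instead prove the operator bound $\widetilde B\preceq mI_n$ directly, recognizing the Rayleigh quotient as the squared Frobenius norm of $N=\sum_i v_i a_i a_i^\top$ and applying Cauchy--Schwarz twice with the identities $\|a_i\|^2=m$ and $\sum_i w_i^* a_i a_i^\top=I_m$; the chain $\|N\|_F^2\le\bigl(\sum_i w_i^*(a_i^\top N a_i)^2\bigr)^{1/2}\le\sqrt{m}\,\|N\|_F$ is valid and gives $u^\top\widetilde B u\le m$ as claimed. Both arguments ultimately exploit the same structural identity $\sum_i w_i^* d_{ij}^2=d_{jj}$ (equivalently $\sum_i w_i^*a_ia_i^\top=I_m$); the paper's route is shorter but imports a nonnegative-matrix theorem, while yours is self-contained, stays entirely within the symmetric calculus, and delivers the bound in the positive-semidefinite order rather than as a spectral-radius estimate --- for the symmetric matrix $\widetilde B$ the two are equivalent, so the conclusions agree exactly.
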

\begin{proof} 
We have $d_{ii}=m$ for all $i$ by the general equivalence theorem (Kiefer and Wolfowitz, 1960).  (Note the assumption that $w^*$ has all positive components.)  Thus 
\begin{equation}
\label{ra}
R(\alpha)=I_m-\frac{\Delta_{w^*} D^*}{m-\alpha},
\end{equation}
where $\Delta_{w^*}={\rm Diag}(w^*)$ as before, and $D^*=(d_{ij}^2)_{n\times n}$.  Define $D=(d_{ij})_{n\times n}$.  The formula $D=X M^{-1}(w^*) X^\top$ shows that $D$ is nonnegative definite, and so is $D^*$ by Lemma \ref{hadamard}, since $D^*$ is the entry-wise product of $D$ with itself.  By (\ref{ra}), $I_m-R(\alpha)$ is similar to $\Delta_{w^*}^{1/2} D^* \Delta_{w^*}^{1/2}/(m-\alpha)$, which is nonnegative definite because $D^*$ is.  Hence $I_m-R(\alpha)$ is diagonalizable, with nonnegative eigenvalues.  That is, $R(\alpha)$ is diagonalizable, and its eigenvalues do not exceed one.  On the other hand, $\Delta_{w^*} D^*$ satisfies (i) each entry is nonnegative, and (ii) each column sums to $\sum_i w_i^* d_{ij}^2=d_{jj}=m$.  By the Frobenius-Perron theorem (see Horn and Johnson (1990), Chapter 8), any eigenvalue of $\Delta_{w^*} D^*$ cannot exceed $m$.  It follows from (\ref{ra}) that any eigenvalue of $R(\alpha)$ is at least $1-m/(m-\alpha)=-\alpha/(m-\alpha)$.
\end{proof}

For Algorithm~I, Theorem \ref{thm1} leads to an eigenvalue bound similar to that for the EM algorithm; see Yu (2009) for another similar situation in the context of Shannon theory. 

\begin{corollary}
\label{coro1}
All eigenvalues of $R(0)$ lie in the interval $[0, 1]$. 
\end{corollary}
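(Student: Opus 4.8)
The plan is to specialize Theorem~\ref{thm1} to the case $\alpha=0$. Since the lower endpoint $-\alpha/(m-\alpha)$ of the eigenvalue interval is continuous in $\alpha$ and equals $0$ when $\alpha=0$, the interval $[-\alpha/(m-\alpha),\,1]$ collapses to $[0,1]$, and the claim follows at once from the eigenvalue bound already established.

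To make this concrete I would simply retrace the two halves of the proof of Theorem~\ref{thm1} with $\alpha=0$. From (\ref{ra}), $R(0)=I_m-\Delta_{w^*}D^*/m$, where $D^*=(d_{ij}^2)$ is nonnegative definite by Lemma~\ref{hadamard}; hence $I_m-R(0)$ is similar to $\Delta_{w^*}^{1/2}D^*\Delta_{w^*}^{1/2}/m\succeq 0$, so $R(0)$ is diagonalizable with all eigenvalues real and at most $1$. For the lower bound, each column of $\Delta_{w^*}D^*$ has nonnegative entries summing to $d_{jj}=m$, so by Frobenius--Perron its spectral radius is $m$; therefore every eigenvalue of $R(0)=I_m-\Delta_{w^*}D^*/m$ is at least $1-m/m=0$.

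There is no genuine obstacle here: the corollary is a direct restatement of Theorem~\ref{thm1} at $\alpha=0$, and the only point worth flagging is that both endpoints of the interval are attained at $\alpha=0$ precisely because $-\alpha/(m-\alpha)\to 0$ there.
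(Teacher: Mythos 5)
Your proposal is correct and matches the paper exactly: the corollary is an immediate specialization of Theorem~\ref{thm1} at $\alpha=0$, where the interval $[-\alpha/(m-\alpha),\,1]$ becomes $[0,1]$, and the paper offers no separate proof for precisely this reason. Your optional retracing of the two halves of the theorem's argument with $\alpha=0$ is faithful to the original and adds nothing that needs checking.
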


Because Algorithm~I converges, it is not surprising that eigenvalues of $R(0)$ do not exceed one.  However, that these eigenvalues are nonnegative shows that the iterations of Algorithm~I are conservative, and may be improved by some form of overrelaxation, e.g., by using $\alpha>0$ in (\ref{alg2+}).  To make this intuition precise, we compare convergence rates of iteration (\ref{alg2+}) for different $\alpha$ (with respect to the same $w^*$).  Equation (\ref{ra}) yields 
\begin{equation}
\label{speed}
I_m-R(\alpha)=\frac{m}{m-\alpha} (I_m-R(0)).
\end{equation}
If we define $I_m-R(\alpha)$ as the {\it matrix speed} of convergence, then (\ref{speed}) has an appealing interpretation: iteration (\ref{alg2+}) is precisely $m/(m-\alpha)$ times as fast as Algorithm~I.  Nevertheless, one should be cautious toward such an interpretation.  First, we need to assume that (\ref{alg2+}) converges, which is not always guaranteed.  Secondly, when some of the eigenvalues of $R(\alpha)$ are negative, iteration (\ref{alg2+}) can actually be slower than Algorithm~I, as the following example illustrates.  Let $n=m=2$ and consider the design space $\mathcal{X}=\{x_1=(1, -1)^\top,\ x_2=(1, 1)^\top\}$.  Iteration (\ref{alg2+}) maps any $w^{(t-1)}=(w_1, w_2)^\top\in \Omega$ to 
$$w^{(t)}=\frac{1}{2-\alpha}(1-\alpha w_1, 1-\alpha w_2)^\top.$$
We have $R(\alpha)=-\alpha I_2/(2-\alpha)$.  The algorithm reaches $w^*=(1/2, 1/2)^\top$ in one iteration if $\alpha=0$, but becomes slower and slower as $\alpha$ increases from $0$ to $1$.  When $\alpha=1$ it alternates between two points $(w_1, w_2)^\top$ and $(w_2, w_1)^\top$ (assuming $w_1\neq w_2$) and does not even converge.  (Non-convergence of Algorithm~II when $m=2$ has been noted by Pronzato et al. (2000).)

What (\ref{speed}) does imply is that, if $\alpha$ is not too large, and if Algorithm~I itself is slow, then iteration (\ref{alg2+}) will converge faster than Algorithm~I.  Intuitively, an $\alpha$ too large would overshoot and slow the algorithm down.  Proposition \ref{prop2} makes this explicit by comparing the global rate $r(\alpha)$. 

\begin{proposition}
\label{prop2}
Assume $r(0)\geq 2\alpha/m$.  Then 
\begin{equation}
\label{precise}
1-r(\alpha)= \frac{m}{m-\alpha} (1-r(0)),
\end{equation}
and hence $r(\alpha)\leq r(0)$.
\end{proposition}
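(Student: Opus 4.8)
The plan is to work on the $R$-invariant subspace $\Gamma$, to express the full spectrum of $R(\alpha)|_\Gamma$ as a monotone affine image of the spectrum of $R(0)|_\Gamma$ via (\ref{speed}), and then to show that the hypothesis $r(0)\ge 2\alpha/m$ is exactly what forces the spectral radius $r(\alpha)$ to be attained at the upper end of that image.

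First I would note that $\Gamma$ is invariant under $R(0)$, so the eigenvalues $\mu_1\le\cdots\le\mu_{n-1}$ of $R(0)|_\Gamma$ form a subset of the eigenvalues of $R(0)$; by Corollary \ref{coro1} they all lie in $[0,1]$, so in particular $\mu_1\ge 0$ and $r(0)=\max_k|\mu_k|=\mu_{n-1}\le 1$. Restricting (\ref{speed}) to $\Gamma$ gives $R(\alpha)|_\Gamma=\bigl(1-\tfrac{m}{m-\alpha}\bigr)I+\tfrac{m}{m-\alpha}R(0)|_\Gamma$, an affine function of $R(0)|_\Gamma$ with positive slope $\tfrac{m}{m-\alpha}$ (since $0\le\alpha<m$); hence the eigenvalues of $R(\alpha)|_\Gamma$ are precisely $\nu_k=1-\tfrac{m}{m-\alpha}(1-\mu_k)$, which is increasing in $\mu_k$. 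Therefore $\nu_{\max}=1-\tfrac{m}{m-\alpha}(1-r(0))$, $\nu_{\min}\ge 1-\tfrac{m}{m-\alpha}$, and $r(\alpha)=\max_k|\nu_k|=\max\{|\nu_{\min}|,|\nu_{\max}|\}$, the extreme values being attained.

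The substantive step is to show $r(\alpha)=\nu_{\max}$, i.e. that $\nu_{\max}\ge 0$ and $|\nu_{\min}|\le\nu_{\max}$. For the first, $r(0)\ge 2\alpha/m$ gives $1-r(0)\le (m-2\alpha)/m$, hence $\tfrac{m}{m-\alpha}(1-r(0))\le (m-2\alpha)/(m-\alpha)\le 1$ (using $\alpha\ge 0$), so $\nu_{\max}\ge 0$. For the second, $\nu_{\min}\le\nu_{\max}$ is immediate, so it is enough to check $\nu_{\min}+\nu_{\max}\ge 0$; writing out both sides this reduces to $\tfrac{m}{m-\alpha}\bigl(2-\mu_1-r(0)\bigr)\le 2$, and clearing the positive factor $m-\alpha$ turns this into $\mu_1+r(0)\ge 2\alpha/m$, which holds because $\mu_1\ge 0$ and $r(0)\ge 2\alpha/m$. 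Thus $r(\alpha)=\nu_{\max}=1-\tfrac{m}{m-\alpha}(1-r(0))$, which rearranges to (\ref{precise}); and then $1-r(\alpha)=\tfrac{m}{m-\alpha}(1-r(0))\ge 1-r(0)$ since $\tfrac{m}{m-\alpha}\ge 1$ and $1-r(0)\ge 0$, so $r(\alpha)\le r(0)$.

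I expect the only delicate point to be the control of $\nu_{\min}$, and this is the sole place the hypothesis enters: if $r(0)<2\alpha/m$, an eigenvalue of $R(0)|_\Gamma$ near $0$ is mapped by the reparametrization to a value below $-\nu_{\max}$, the spectral radius becomes $|\nu_{\min}|$, and (\ref{precise}) fails --- this is exactly the behaviour in the $n=m=2$ example above, where $R(\alpha)=-\alpha I/(2-\alpha)$ and $r(\alpha)$ increases with $\alpha$. A secondary point to set out carefully is the passage from the matrix identity (\ref{speed}) on $\mathbf{R}^n$ to its restriction to the invariant subspace $\Gamma$, and the observation that the eigenvalues of such a restriction form a subset of those of the ambient operator, which is what licenses the use of Corollary \ref{coro1} for $R(0)|_\Gamma$.
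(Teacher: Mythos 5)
Your proposal is correct and follows essentially the same route as the paper: both use the affine relation (\ref{speed}) restricted to $\Gamma$, use Corollary \ref{coro1} to identify $r(0)$ with the largest eigenvalue, and use the hypothesis $r(0)\ge 2\alpha/m$ to show the top eigenvalue $\nu_{\max}\ge \alpha/(m-\alpha)$ dominates the modulus of the bottom one (the paper bounds $\nu_{\min}\ge -\alpha/(m-\alpha)$ by citing Theorem \ref{thm1}, while you derive the same bound from $\mu_1\ge 0$ and the affine map). The only differences are cosmetic.
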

\begin{proof}
Let $r_+(\alpha)$ (resp.\ $r_-(\alpha)$) denote the largest (resp.\ smallest) eigenvalues of $R(\alpha)$ when restricted as a linear mapping on $\Gamma$.  Then $r(\alpha)=\max\{|r_+(\alpha)|,\, |r_-(\alpha)|\}$.  Corollary \ref{coro1} implies $r(0)=r_+(0)$.  By (\ref{speed}), $$1-r_+(\alpha) = \frac{m}{m-\alpha} (1-r(0))\leq \frac{m-2\alpha}{m-\alpha}.$$  
That is, $r_+(\alpha)\geq \alpha/(m-\alpha)$.  On the other hand, Theorem \ref{thm1} implies $r_-(\alpha)\geq -\alpha/(m-\alpha)$.  Hence $r_+(\alpha)\geq |r_-(\alpha)|$, and $r(\alpha)=r_+(\alpha)$, thus proving (\ref{precise}).
\end{proof} 

\begin{corollary}
\label{coro}
If $r(0)\geq 2/m$, then the global rate of Algorithm~II is no worse than that of Algorithm~I. 
\end{corollary}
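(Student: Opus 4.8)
The plan is to obtain Corollary~\ref{coro} as the special case $\alpha=1$ of Proposition~\ref{prop2}. Recall from the discussion surrounding~(\ref{alg2+}) that Algorithm~II is precisely iteration~(\ref{alg2+}) with $\alpha=1$; consequently its global rate of convergence is $r(1)$, whereas that of Algorithm~I (i.e.\ $\alpha=0$) is $r(0)$. Since a smaller global rate means faster linear convergence near $w^*$, the statement to be proved amounts to $r(1)\leq r(0)$.

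First I would verify that the hypothesis of Proposition~\ref{prop2} is met: with $\alpha=1$ the condition $r(0)\geq 2\alpha/m$ reads $r(0)\geq 2/m$, which is exactly the assumption of the corollary. (Note also that $m-\alpha=m-1>0$ since $m\geq 2$, so $r(1)$ and the formula~(\ref{speed}) are well defined.) Then I would invoke Proposition~\ref{prop2} directly with $\alpha=1$: it yields $1-r(1)=\frac{m}{m-1}\,(1-r(0))$, and since $0\leq r(0)\leq 1$ by Corollary~\ref{coro1} this forces $r(1)\leq r(0)$, which is the desired conclusion.

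There is essentially no obstacle here; the corollary is a one-line specialization. The only points one must not overlook are the identification of Algorithm~II with the $\alpha=1$ instance of~(\ref{alg2+}) and the convention that ``no worse'' refers to the global rate being no larger (equivalently, the matrix speed $I_m-R(\alpha)$ being no smaller, cf.\ (\ref{speed})).
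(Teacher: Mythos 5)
Your proposal is correct and matches the paper's (implicit) argument exactly: the corollary is stated as an immediate specialization of Proposition~\ref{prop2} to $\alpha=1$, with the hypothesis $r(0)\geq 2\alpha/m$ becoming $r(0)\geq 2/m$ and the conclusion $r(1)\leq r(0)$ following from $1-r(1)=\frac{m}{m-1}(1-r(0))\geq 1-r(0)$. Your added checks (that Algorithm~II is the $\alpha=1$ case of~(\ref{alg2+}) and that $r(0)\in[0,1]$ by Corollary~\ref{coro1}) are accurate and harmless.
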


Corollary~\ref{coro} suggests that Algorithm~II is likely to converge faster than Algorithm~I, as long as $m\geq 3$ and 
$r(0)$ is reasonably large.  Note that in a practical situation, Algorithm~I can be quite slow, i.e., $r(0)$ is close to one.  This explains the observed improvement of using Algorithm~II in numerical examples. 

Dette et al.\ (2008) consider a version of (\ref{alg2+}) where $\alpha=\alpha^{(t)}$ is set at each iteration.  It is shown that by choosing 
\begin{equation}
\label{dynamic}
\alpha^{(t)}=\frac{1}{2}\min_i x_i^\top M^{-1}(w^{(t-1)}) x_i
\end{equation}
the resulting algorithm is monotonic, and usually converges faster than Algorithm~I.  Although this algorithm is dynamic, we can still discuss its asymptotic rate of convergence, because if $t\to\infty$ and $w^{(t)}\to w^*$, then $\alpha^{(t)}$ also tends to a limit: 
$$\hat{\alpha}\equiv \lim_{t\to\infty} \alpha^{(t)}=\frac{1}{2}\min_i x_i^\top M^{-1}(w^*) x_i.$$
It follows that, for large $t$, each iteration of this dynamic algorithm behaves as if $\alpha$ is fixed at $\hat{\alpha}$.  If $w^*$ has all positive components, then $\hat{\alpha}=m/2$ by the general equivalence theorem; in general $0\leq \hat{\alpha}\leq m/2$.  If $\hat{\alpha}=m/2$, and if (\ref{precise}) holds, then we can loosely say that the dynamic algorithm is twice ($m/(m-\hat{\alpha})=2$) as fast as Algorithm~I.  In a practical situation, however, it is more likely that $\hat{\alpha}<m/2$, hence we may expect a less pronounced improvement; see Section~4 for a numerical example. 

\section{Numerical Example}
The formula (\ref{precise}) is derived under the assumption that all coordinates of $w^*$ are positive.  As mentioned earlier, in realistic problems this is usually not true.  It is therefore reasonable to ask whether (\ref{precise}) holds in any practical sense.  To study this, we employ an empirical measure of the convergence rate, defined as 
\begin{equation}
\label{empirical}
\hat{r}=\lim_{t\to\infty} \frac{|w^{(t+1)}-w^{(t)}|}{|w^{(t)}-w^{(t-1)}|},
\end{equation}
where $|v|=(\sum_i v_i^2)^{1/2}$.  We compare the $\hat{r}$ for iteration (\ref{alg2+}) with different values of $\alpha$ for a few regression models.  Define $s_i=i/20,\ i=1, \ldots, 20$.  Similar to Dette et al.\ (2008), we consider design spaces 
\begin{align*}
\mathcal{X}_1 &=\{x_i=(1,\, e^{-s_i},\, s_i e^{-s_i})^\top,\ i=1,\ldots, 20\};\\
\mathcal{X}_2 &=\{x_i=(1,\, s_i/(\kappa+s_i),\, s_i/(\kappa+s_i)^2)^\top,\ i=1, \ldots, 20\},\quad \kappa=0.5;\\
\mathcal{X}_3 &=\{x_i=(1,\, s_i,\, s_i^2,\, s_i^3)^\top,\ i=1,\ldots, 20\}.
\end{align*}
Note that a D-optimal design on $\mathcal{X}_2$ is equivalently a locally D-optimal design for the parameter $(\beta_0, \beta_1, \kappa)$ in the nonlinear model,
\begin{equation}
\label{mm}
y=\beta_0+\frac{\beta_1 s}{\kappa+s} +\epsilon,\quad \epsilon\sim {\rm N}(0,\, \sigma^2),
\end{equation}
where the design space for $s$ is $\{s_i=i/20,\ i=1,\ldots, 20\}$, and the prior guess for $\kappa$ is $\kappa^*=0.5$.  
Mathematically, (\ref{mm}) with $\beta_0=0$ corresponds to the Michaelis-Menten model often employed to describe enzyme kinetics. 

Table~1 records the estimates of $1-\hat{r}$ for iteration (\ref{alg2+}) with various choices of $\alpha$ (fixed or dynamic).  Each algorithm is started from the uniform design ($w_i=1/20,\ i=1,\ldots, 20$), and $\hat{r}$ is estimated by the ratio on the right hand side of (\ref{empirical}) when it stabilizes.  The first three columns of Table 1 deal with fixed $\alpha$, in which case we write $\hat{r}=\hat{r}(\alpha)$.  If we interpret $1-\hat{r}$ as the {\it empirical speed} of convergence, then evidently larger values of $\alpha$ improve the speed.  For $\mathcal{X}_1$ and $\mathcal{X}_2$, the ratio of improvement, $(1-\hat{r}(\alpha))/(1-\hat{r}(0)),$ is approximately equal to $m/(m-\alpha),\ \alpha=0.5, 1$.  For $\mathcal{X}_3$, this ratio is below the value suggested by (\ref{precise}) for either $\alpha=0.5$ or $\alpha=1$.  However, it is possible that accurate estimation of the ratio of improvement becomes more difficult because the algorithms are much slower for $\mathcal{X}_3$ than for $\mathcal{X}_1$ or $\mathcal{X}_2$.  The last column concerns the algorithm of Dette et al.\ (2008) where $\alpha$ is set dynamically as in  (\ref{dynamic}).  The limiting value $\lim_{t\to\infty} \alpha^{(t)}$ is estimated at $\hat{\alpha}_1=0.939$ for $\mathcal{X}_1$, $\hat{\alpha}_2=0.935$ for $\mathcal{X}_2$, and $\hat{\alpha}_3=1.303$ for $\mathcal{X}_3$.  We observe that these agree well with what (\ref{precise}) suggests.  For example, the ratios of improvement for the dynamic algorithm are 
$$\frac{0.0245}{0.0168}\approx \frac{3}{3-\hat{\alpha}_1}\quad {\rm and}\quad \frac{0.0256}{0.0177}\approx \frac{3}{3-\hat{\alpha}_2}$$
for $\mathcal{X}_1$ and $\mathcal{X}_2$ respectively.  Overall, we believe that (\ref{precise}) remains suggestive of how much iteration (\ref{alg2+}) can improve upon Algorithm~I in realistic situations. 

\begin{table}
\caption{Values of $1-\hat{r}$ (the empirical speed) for iteration (\ref{alg2+}) with several design spaces and choices of $\alpha$.  Dynamic $\alpha$ refers to the algorithm of Dette et al. (2008).}
\begin{center}
\begin{tabular}{l|rrrr}
\hline & $\alpha=0$ & $\alpha=0.5$ & $\alpha=1$ & dynamic $\alpha$\\
\hline
$\mathcal{X}_1$ & 0.0168 & 0.0202  & 0.0252 & 0.0245\\
$\mathcal{X}_2$ & 0.0177 & 0.0212  & 0.0264 & 0.0256\\
$\mathcal{X}_3$ & 0.0062 & 0.0068  & 0.0076 & 0.0082\\
\hline 
\end{tabular}
\end{center}
\end{table}

\section*{Acknowledgments}
The author would like to thank Don Rubin, Xiao-Li Meng, and David van Dyk for introducing him to the field of statistical computing.  He is also grateful to the editor, the associate editor, and the referees for their valuable comments.

\end{document}